\documentclass[aps,pra,twocolumn,superscriptaddress,floatfix,nofootinbib,showpacs,longbibliography]{revtex4-1}
\usepackage[utf8]{inputenc}  
\usepackage[T1]{fontenc}     
\usepackage[british]{babel}  
\usepackage[sc,osf]{mathpazo}
\usepackage[scaled=0.86]{berasans}  
\usepackage[colorlinks=true, citecolor=blue, urlcolor=blue]{hyperref}  
\usepackage{graphicx}
\usepackage{tabularray}
\UseTblrLibrary{amsmath}
\usepackage[babel]{microtype}  
\usepackage{amsmath,amssymb,amsthm,bm,amsfonts,mathrsfs,bbm} 

\usepackage{xspace}  
\usepackage{pgf,tikz}
\usepackage{xcolor}
\usepackage{multirow}
\usepackage{array}
\usepackage{bigstrut}
\DeclareUnicodeCharacter{020B}{\^{\i}}
\usepackage{braket}
\usepackage{color}
\usepackage{natbib}
\usepackage{multirow}
\usepackage{mathtools}
\usepackage{float}
\usepackage{xcolor,colortbl}
\usepackage{physics}
\usepackage{amsmath}
\usepackage{color}
\usepackage{bbold}
\usepackage[justification=justified, format=plain]{subcaption}
\usepackage[justification=raggedright]{caption}

\newcommand{\be}{\begin{equation}}
\newcommand{\ee}{\end{equation}}
\newcommand{\ba}{\begin{eqnarray}}
\newcommand{\ea}{\end{eqnarray}}

\newtheorem{theorem}{Theorem}

\newtheorem{definition}{Definition}

\newtheorem{example}{Example}

\def\>{\rangle}
\def\<{\langle}

\usepackage{comment}

\begin{document}

\title{Optical interferometric tools for probing memory effects}

\author{Saheli Mukherjee}
\email{mukherjeesaheli95@gmail.com}
\affiliation{S. N. Bose National Centre for Basic Sciences, Block JD, Sector III, Salt Lake, Kolkata 700 106, India}

\author{Bivas Mallick}
\email{bivasqic@gmail.com}
\affiliation{Statistics and Mathematics Unit, Indian Statistical Institute, 560059 Bangalore,
India}

\author{Samyadeb Bhattacharya}
\affiliation{Center for Quantum Science and Technology,
International Institute of Information Technology, Gachibowli, Hyderabad 500 032, India} 

\author{Ananda G. Maity}
\email{anandamaity289@gmail.com}
\affiliation{School of Physical Sciences, Indian Institute of Technology Goa, Ponda 403401, Goa, India}

\author{A. S. Majumdar}
\affiliation{S. N. Bose National Centre for Basic Sciences, Block JD, Sector III, Salt Lake, Kolkata 700 106, India}

\begin{abstract}
Recent efforts to detect non-Markovian memory have become increasingly important, motivated by the broad relevance of memory effects in quantum information processing, metrology, and thermodynamic engines. Here, we show that interferometric methods, particularly the phase shift and visibility, provide an experimentally accessible framework for probing environmental effects and witnessing non-Markovian dynamics. Specifically, suitably constructed witness operators enable the detection of non-Markovianity through measurable phase shifts in the interference pattern. Furthermore,  we prove that the interferometric visibility is monotonically non-increasing under incoherent CP-divisible (Markovian) dynamics. Consequently, any revival of the optimized visibility constitutes a sufficient signature of information backflow and hence non-Markovianity. The effectiveness of the proposed phase-based and visibility approaches is illustrated for representative examples of non-Markovian dynamics. Our results offer a simple and experimentally feasible framework for probing memory effects, with potential applications in quantum sensing and noisy intermediate-scale quantum devices.

\end{abstract}
\maketitle

\section{Introduction}\label{s1}
Understanding memory effects in open quantum systems is a central challenge in quantum physics, especially for harnessing noise as a resource in quantum technologies \cite{xiang2014entanglement,reich2015exploiting}, quantum information processing \cite{laine2014nonlocal,PhysRevLett.100.180402,bylicka2014non,zhang2024non}, metrology \cite{PhysRevLett.127.060501}, and thermodynamics \cite{PhysRevE.97.062108,marcantoni2017entropy,bylicka2016thermodynamic,bhattacharya2020thermodynamic,PhysRevA.111.032416}. In realistic settings, quantum systems are inevitably coupled to their environments, leading to decoherence, dissipation and other irreversible phenomena. When this interaction is weak and the environment has no memory, the system dynamics can be described by memoryless (Markovian) processes, characterized by a monotonic loss of quantum features such as coherence and distinguishability. However, many physically relevant environments exhibit memory effects, giving rise to non-Markovian dynamics, where information can flow back from the environment to the system \cite{rivas2014quantum,RevModPhys.88.021002}. 

In recent times, such non-Markovian memory has emerged as a valuable resource for quantum technologies, enabling advantages in tasks such as quantum teleportation with mixed states \cite{laine2014nonlocal}, entanglement distribution \cite{xiang2014entanglement}, communication through long quantum channels \cite{bylicka2014non}, and enhanced work extraction in quantum thermal machines \cite{PhysRevE.97.062108}. These advantages stem from information backflow from the environment to the system, whereby non-Markovian memory is transformed into useful resources such as entanglement, coherent information, and extractable work. This highlights the importance of identifying and characterizing memory effects in open quantum dynamics, both for understanding their physical origin and for determining when they can be exploited as a resource.

Detecting and quantifying non-Markovianity remains a nontrivial task, and a variety of witnesses have been proposed based on information backflow \cite{PhysRevLett.103.210401}, divisibility of dynamical maps \cite{rivas2010entanglement,PhysRevA.109.022247}, revivals of quantum correlations \cite{PhysRevA.85.032318,zhang2026analytical,chanda2016delineating}, uncertainty relations \cite{maity2020detecting}, mutual information between system and ancilla \cite{PhysRevA.86.044101}, accessible information \cite{PhysRevLett.112.210402}, etc. Among these, coherence plays a particularly important role. Since coherence is directly linked to interference phenomena, it provides a natural operational route to probe memory effects \cite{chanda2016delineating, zhi2015non,PhysRevA.96.022106,shao2020quantifying, sun2022quantifying}. This connection is especially appealing in interferometric settings, where coherence is manifested directly through the visibility of interference fringes.

Interferometric visibility quantifies the contrast of interference fringes and hence can serve as an operational probe of the nature of noisy quantum dynamics through its direct relationship with quantum coherence \cite {biswas2017interferometric}. Under Markovian noise, coherence, and hence visibility typically decays monotonically. In contrast, the presence of environmental memory can induce revivals of coherence, which manifest as non-monotonic behavior in the visibility. This establishes a direct connection between experimental quantities like interferometric visibility, phase and the underlying dynamical properties of the system.
In this work, we explore quantum interferometric visibility as a physically transparent and experimentally accessible witness of non-Markovian dynamics.

The connection between non-Markovianity and interferometry has previously been explored primarily through a metrological measure of quantum correlations dubbed as quantum interferometric power, which is defined as the minimum quantum Fisher information over all local Hamiltonians acting on an ancillary subsystem \cite{PhysRevA.91.032115}. This approach was shown to consistently capture information backflow for dephasing and amplitude-damping channels, and was  extended to continuous-variable systems \cite{PhysRevA.92.052122}. The significance of quantum interferometric power as a useful quantifier of quantum correlations in a two-qubit system subjected to a common dephasing bath has been studied in \cite{PhysRevA.105.042612}. A quantifier of non-Markovian dynamics with and without the rotating-wave approximations in system-environment interactions has been modelled \cite{xiong2023quantum}. Nonclassicality under non-Markovian random telegraph noise has been quantified using the interferometric power \cite{elghaayda2023quantum}, but the analysis is restricted to a physical model of a bipartite Gisin state \cite{gisin1996hidden, PhysRevLett.74.2619,thirring2011entanglement}.
While the above approach establishes a fundamental connection between information backflow and metrological correlations, its implementation requires an ancillary subsystem and  a nonlinear optimization over local Hamiltonians.  Unlike these approaches based on the quantum interferometric power, our objective is to identify a witness that relies solely on the interference pattern of a single evolving quantum system, thereby avoiding ancillary resources and optimization procedures.

In this work we show that the dynamical behavior of interferometric visibility provides a robust and operational criterion to detect non-Markovianity. By linking visibility to non-Markovianity measures, we further demonstrate how our approach captures the flow of quantum information between system and environment. Our framework thus offers a conceptually simple and experimentally feasible method to witness non-Markovian dynamics, with potential applications in quantum sensing and noisy intermediate-scale quantum devices.

The paper is organized as follows. In Sec.~\ref{secII}, we provide a brief overview of the dynamics of open quantum systems, followed by a discussion of quantum coherence and its quantification using the $l_1$-norm. We also present the basic Mach–Zehnder interferometer setup along with its constituent elements. In Sec.~\ref{secIII}, we introduce our main formalism for detecting non-Markovian dynamics through an interference pattern. In particular, we develop witness-based and coherence-based detection schemes utilizing the phase shift and visibility of the interference pattern. We consider representative examples of pure dephasing and depolarizing dynamics to illustrate the applicability of our approach. Finally, we summarize our main findings along with some future perspectives in Sec.~\ref{secIV}.

\section{Preliminaries} \label{secII}
In this section, we outline the basic concepts, notation, and mathematical tools underlying the results presented in this work. Throughout the paper, we adopt the standard formalism and terminology of quantum information theory.

\subsection{Open quantum system dynamics} \label{secIIA}
Interactions with the environment are ubiquitous in nature. While isolated quantum systems evolve unitarily, for systems interacting with the environment (commonly known as open quantum systems), the dynamics is governed by completely positive and trace-preserving (CPTP) maps. Linear CPTP maps are linear transformations between density operators, i.e., $\Lambda(t,t_0):\rho(t_0)\rightarrow \rho(t)$. We assume that the inverse map $\Lambda(t,t_0)^{-1}$ exists for all times $t \ge t_0$. A dynamical map $\Lambda(t,t_0)$ is said to be \textit{divisible} iff, for any intermediate time $s$ satisfying $t\geq s\geq t_0$, it can be decomposed as
\begin{equation}
    \Lambda(t,t_0)=\Lambda(t,s) \circ \Lambda(s,t_0) \label{divisibility}
\end{equation}
where $'\circ'$ denotes the map composition. Although the initial map $\Lambda(s,t_0)$ and the overall map $\Lambda(t,t_0)$ are completely positive, the intermediate map $\Lambda(t,s)$ need not represent a physical dynamical evolution. The dynamics $\Lambda(t,t_0)$ is said to be \textit{positive divisible} (P-divisible) if the map $\Lambda(t,s)$ is positive and the divisibility property, as given by Eq.~\eqref{divisibility} is satisfied for every $t\ge s \ge t_0$. The dynamics $\Lambda(t,t_0)$ is said to be \textit{completely positive divisible} (CP-divisible) if the map $\Lambda(t,s)$ is completely positive (CP) and Eq.~\eqref{divisibility} is satisfied for every $t\ge s \ge t_0$. Note that by virtue of the Choi-Jamiolkowski isomorphism \cite{choi1975completely, jamiolkowski1972linear}, complete positivity of the map $\Lambda(t,s)$ implies that the associated Choi operator $\mathcal{C}_{\Lambda(t,s)} = [\mathbb{I} \otimes \Lambda(t,s)] \ket{\Phi^+}\bra{\Phi^+}$ (where $\ket{\Phi^+}$ denotes the maximally entangled state) is positive. Depending on the nature of the system-environment interaction, the dynamics $\Lambda(t,t_0)$ is termed Markovian if it is divisible over all time intervals; otherwise, it is non-Markovian. This characterization of non-Markovian dynamics is commonly referred to as the RHP notion of non-Markovianity \cite{rivas2010entanglement}.

Another widely used approach to characterize Markovianity, introduced by Breuer, Laine, and Piilo (BLP) is based on the distinguishability between quantum states which is generally quantified by the trace distance \cite{PhysRevLett.103.210401}. Under Markovian evolution, the trace distance decreases monotonically due to the continuous loss of information from the system to the environment. Any increase in the trace distance indicates a backflow of information from the environment to the system and is therefore identified as a signature of non-Markovian behavior. This provides an operationally meaningful and physically intuitive description of memory effects in open quantum systems.

\subsection{Quantum coherence and its measure}
Quantum coherence arises due to the superposition principle and is characterized by the presence of off-diagonal terms in the density matrix for a fixed chosen basis. Any state of the form 
\begin{equation}  \label{incoherent}
    \rho=\sum_{i=0}^{d-1}c_i \ket{i}\bra{i} 
\end{equation} 
is diagonal with respect to the basis $\{\ket{i}\}_{i=0}^{d-1}$ and thus represents an incoherent state. States that cannot be written in the above form are said to be coherent. Such states are useful resources with applications in quantum thermodynamics \cite{PhysRevLett.113.150402,narasimhachar2015low,PhysRevX.5.021001,korzekwa2016extraction,scully2011quantum}, metrology \cite{giovannetti2004quantum,PhysRevA.94.052324}, quantum algorithms \cite{PhysRevA.93.012111,anand2016coherence}, channel discrimination \cite{PhysRevLett.116.150502,PhysRevA.93.042107}, motivating the development of the resource theory of coherence \cite{PhysRevLett.113.140401,levi2014quantitative,PhysRevLett.116.120404,RevModPhys.89.041003}. 
In this resource theory, incoherent states (as defined in Eq.~\eqref{incoherent}) constitute the free states, and incoherent operations are the free operations. 

To quantify coherence present in a state (if any), several measures have been proposed \cite{PhysRevA.93.042107,aberg2006quantifying,PhysRevLett.113.140401,PhysRevA.92.022124,PhysRevLett.115.020403,PhysRevA.91.042120,zhang2017estimation,PhysRevA.92.052114,biswas2017interferometric,PhysRevLett.116.070402}. Here, we focus on the $l_1$-norm of coherence \cite{PhysRevLett.113.140401}, defined as
\begin{equation}  \label{coherencemeasure}
    C_{l_1}(\rho)=\sum_{i \neq j}|\rho_{ij}|
\end{equation}

For a qubit density matrix $\rho=\begin{pmatrix}
\rho_{00} & \rho_{01} \\
\rho_{10} & \rho_{11}
\end{pmatrix}$, the expression for $l_1$ norm reduces to $C_{l_1}(\rho)=2|\rho_{01}|$.
It has been further shown that the $l_1$ norm of coherence satisfies positivity, convexity, and monotonicity under incoherent operations, thereby constituting a valid coherence measure. 
For a maximally coherent state of dimension $d$, the value of this measure equals $d-1$.

In addition, quantum coherence is also intrinsically connected to several other non-classical features such as quantum interference \cite{young1802ii}, with coherence being a necessary prerequisite for the observation of interference effects. Motivated by this connection, several measures of coherence in terms of interferometric visibility have also been proposed \cite{PhysRevA.92.012118,PhysRevLett.116.160406,biswas2017interferometric,PhysRevLett.118.020403}. To understand this further, we briefly review the basic setup of the Mach-Zehnder interferometer below.

\subsection{Details of the interferometric setup}
This subsection is devoted to a detailed understanding of a Mach-Zehnder interferometer setup. Figure \ref{Pic1} represents a schematic diagram of the setup.
\begin{figure}[ht]
\includegraphics[width=0.5\textwidth]{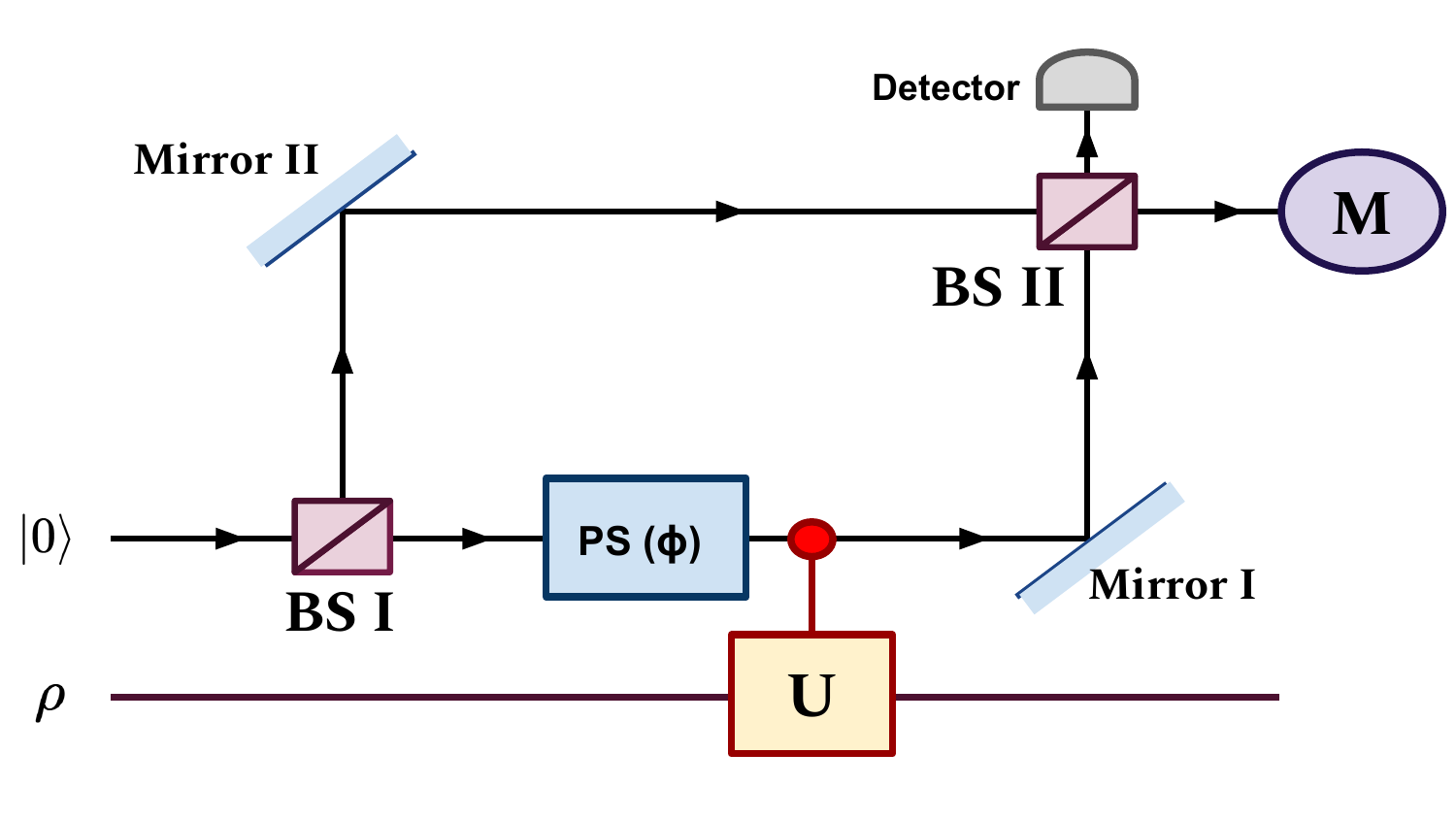}
\caption{The interferometer setup consists of two $50-50$ beam splitters (denoted by BS I and BS II), a phase shifter ($PS(\phi)$), two mirrors (Mirrors I and II) and a detector. $\ket{0}$ and $\rho$ represent the inputs to the path and internal dof. $U$ represents a controlled unitary acting on the internal dof as target. $M$ is a measurement performed along path $\ket{0}$ in the computational basis.}\label{Pic1}
\centering
\end{figure}
Let $\mathcal{H}$ and $\mathcal{H}'$ represent the Hilbert spaces corresponding to the path and internal degrees of freedom (dof) of the interferometer, respectively. The path Hilbert space is spanned by the basis ${\ket{0},\ket{1}}$, representing the two interferometric paths, while ${\ket{k}}_{k=1}^N$ forms a basis for the internal dof. We consider an initial input state of the form $\ket{0}\bra{0}\otimes\rho$, where $\rho$ acts on the internal Hilbert space and is taken to be diagonal in the basis ${\ket{k}}$. 
This state then undergoes a sequence of unitary operations, as described below. 
\begin{itemize}
    \item A Hadamard gate $H$ acting on the path degree of freedom, which represents the first $50-50$ beam splitter (BS I). The corresponding unitary is $H\otimes\mathbb{I}$.
    \item The Hadamard is followed by a phase shifter (PS ($\phi$)) introducing a relative phase $\phi$ in the path $\ket{0}$, described by the unitary
    $e^{i \phi} \ket{0}\bra{0} \otimes \mathbb{I} + \ket{1}\bra{1} \otimes \mathbb{I}$.
    \item A controlled unitary operation with the path degree of freedom as the control and the internal degree of freedom as the target, is given by $\ket{0}\bra{0} \otimes \mathbb{I} + \ket{1}\bra{1} \otimes U$.
    \item Mirrors are represented by $\sigma_{x} \otimes \mathbb{I}$.
    \item A second Hadamard operation, representing the second beam splitter BS II is subsequently applied to the path degree of freedom with corresponding unitary: $H \otimes \mathbb{I}$.
    \item Finally, a projective measurement is performed in the $\{\ket{0},\ket{1}\}$ basis.
\end{itemize}

The overall evolution of the interferometer yields the final state, $\rho_{f} = U_{total} (\rho \otimes \ket{0}\bra{0}) U_{total}^{\dagger}$, which after simplication reduces to:
\begin{equation}
\begin{split}
\rho_{f} =& \frac{1}{2} (\ket{+}\bra{+} \otimes U \rho U^{\dagger} + e^{-i \phi} \ket{+}\bra{-}\otimes U \rho \\ &+ e^{i \phi} \ket{-}\bra{+}\otimes \rho U^{\dagger} + \ket{-}\bra{-}\otimes \rho). \label{finalstate}
\end{split}
\end{equation}
The intensity measured at the output port corresponding to the path $\ket{0}$ is given by $\mathcal{I}=\Tr[(\ket{0}\bra{0}\otimes \mathbb{I}) \rho_{f}]$. Using $\Tr(U \rho) = |\Tr(U \rho)| e^{i \text{arg}(Tr(U \rho))}$ and $\Tr[(U \rho)] = \Tr[\rho U^{\dagger}]$, one can obtain
\begin{equation}
    \mathcal{I} = \frac{1}{2} [1+ |\Tr(U \rho)| \cos{(\phi - \text{arg}[\Tr(U \rho)])}]  \label{intensity}
\end{equation}
where $|\Tr(U \rho)|$ and $\alpha=\text{arg}[\Tr(U \rho)]$ are the visibility and phase shift of the interference pattern respectively. These represent the primary quantities of interest of an interferometer and are useful for detecting many resources \cite{PhysRevLett.85.2845,PhysRevLett.89.127902,kanjilal2023entanglement,PhysRevLett.88.217901,PhysRevA.92.022316,PhysRevLett.125.123601, 828t-3czf}. 

Below, we show how to detect a non-Markovian dynamics from the visibility and phase shift of an interference pattern by choosing a suitable unitary $U$.  

\section{Detection of non-Markovian dynamics in an interferometric setup} \label{secIII}
Before proceeding to the detection scheme, let us review the working principle of a Mach–Zehnder interferometer with two paths and examine the crucial role played by noise introduced in the input state. In the absence of any noise, we can design the interferometric setup by adjusting the path lengths and phase such that one detector experiences complete constructive interference while the other experiences complete destructive interference. However, if noise is introduced into the system, the quality of interference will degrade. This may cause photons to sometimes appear at the detector meant for destructive interference. This phenomenon can be explained by the well-known complementarity between the "which-path information of the photon" and the "interferometric visibility": when noise is present, the photon leaves some trace on the interferometer's arm, making it possible, in principle, to determine which path the photon has gone through. This leads to a reduction in the visibility of the interference pattern. Thus, in the presence of noise, the quantum nature of the interference would effectively reduce towards a classical probabilistic path. So, whenever a photon is detected at the previously destructive port, it indicates the presence of noise, providing a footprint of the disturbance. 

However, apart from the mere presence of noise, its nature also has a direct influence on the interference pattern. Non-Markovian noise introduced into the system will induce a backflow of information, and hence the visibility will exhibit time-dependent oscillations, from which we can potentially detect the signature of non-Markovianity.
Building on this concept, we aim to detect the non-Markovianity present in the input state by proposing witness-based and coherence-based detection schemes from the phase shift and visibility (respectively) of an interference pattern in the subsequent sections. 

\subsection{Witness-based detection of non-Markovian dynamics from the phase shift of an interference pattern}\label{s2}
 A convex resource theory of non-Markovianity under the constraint of small time approximation
has been proposed earlier \cite{bhattacharya2021convex}. Note that such small time approximation is introduced to restore the convexity of the set of CP-divisible free operations, which is otherwise non-convex \cite{PhysRevLett.101.150402,wolf2008dividing}. Using the Choi-Jamiolkowski isomorphism \cite{jamiolkowski1972linear,choi1975completely}, Markovian dynamics are mapped to Choi states that constitute the free states of the theory, while Choi states corresponding to non-Markovian dynamics represent resourceful states. In the limit $\Delta t\to0$,  the set of Choi states associated with Markovian dynamics, ${\mathcal{C}_{\mathcal{M}}(t+\Delta t,t)}$, forms a convex and compact set. Consequently, by the Hahn-Banach separation theorem \cite{holmes2012geometric}, there exists a hyperplane capable of separating at least one non-Markovian Choi state from the set of all Markovian ones, i.e., if $W$ represents the witness operator, 
\begin{equation*}
\begin{split}
    & \Tr[W \mathcal{C} (t+\Delta{t},t)] \ge 0 ~~ \forall \mathcal{C}(t+\Delta{t},t) \in \{\mathcal{C}_{\mathcal{M}} (t+\Delta{t},t)\},
    \\ & \text{and,} \Tr[W \mathcal{C}^{*} (t+\Delta{t},t)] < 0 ~~\text{for at least one} ~~ \\& \mathcal{C}^{*} \notin \{\mathcal{C}_{\mathcal{M}} (t+\Delta{t},t)\}.
    \end{split}
\end{equation*}
Below, we present one of our main theorem to detect a non-Markovian dynamics from the phase-shift of an interference pattern, using the framework of witness operators.

\begin{theorem} \label{theorem1}
    Every linear non-Markovian witness operator admits an interferometric realization through the phase shift of an interference pattern.
\end{theorem}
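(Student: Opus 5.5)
The plan is to turn the witness expectation value $\Tr[W\mathcal{C}]$ into the argument of $\Tr(U\rho)$ in Eq.~\eqref{intensity}, with $\rho$ taken to be the Choi state $\mathcal{C}(t+\Delta t,t)$ fed into the internal degree of freedom. By the Choi--Jamiolkowski isomorphism this is a valid density operator on $\mathbb{C}^d\otimes\mathbb{C}^d$, and it can be prepared by sending half of $\ket{\Phi^+}$ through the dynamics over $[t,t+\Delta t]$. The witness $W$ is Hermitian, so I will take
\begin{equation*}
U = e^{i\epsilon W},
\end{equation*}
which is unitary for any real $\epsilon>0$. The controlled-$U$ in the setup is then a controlled evolution generated by $W$.

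First I will expand the trace:
\begin{equation*}
\Tr(U\mathcal{C}) = 1 + i\epsilon \Tr(W\mathcal{C}) - \tfrac{\epsilon^2}{2}\Tr(W^2\mathcal{C}) + O(\epsilon^3).
\end{equation*}
This gives the phase shift
\begin{equation*}
\alpha=\arg\Tr(U\mathcal{C}) = \epsilon\,\Tr(W\mathcal{C}) + O(\epsilon^3).
\end{equation*}
The real part $1-\tfrac{\epsilon^2}{2}\langle W^2\rangle$ stays positive for small $\epsilon$. The $O(\epsilon^2)$ term is real, so it does not affect the argument at that order. Hence, for sufficiently small $\epsilon$, $\operatorname{sign}(\alpha)=\operatorname{sign}(\Tr[W\mathcal{C}])$ whenever $\Tr[W\mathcal{C}]\neq 0$. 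The criterion then reads: a negative phase shift $\alpha<0$ certifies that $\mathcal{C}\notin\{\mathcal{C}_{\mathcal{M}}\}$, i.e.\ non-Markovianity, while every Markovian Choi state gives $\alpha\ge 0$ up to the controlled higher-order term.

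The main obstacle is making the small-$\epsilon$ step rigorous and uniform over the set, particularly near the boundary where $\Tr[W\mathcal{C}]\approx 0$. I would first try to avoid perturbation theory altogether. Rescale and shift $W$ as $\tilde W = W/\|W\|_\infty$, so that its spectrum lies in $[-1,1]$. Then choose $\epsilon\le\pi/2$, so that every eigenphase of $U=e^{i\epsilon\tilde W}$ lies in $[-\pi/2,\pi/2]$. Writing $\mathcal{C}$ in the eigenbasis of $\tilde W$, we get
\begin{equation*}
\Tr(U\mathcal{C})=\sum_k p_k e^{i\epsilon w_k},
\end{equation*}
with $p_k\ge 0$. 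Its imaginary part is $\sum_k p_k\sin(\epsilon w_k)$ and its real part is positive, so $\operatorname{sign}\alpha=\operatorname{sign}\sum_k p_k\sin(\epsilon w_k)$.

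Since $\sin$ is not linear, this is not exactly the sign of $\sum_k p_k w_k$. I would fix this with a lifting: replace the unitary by a controlled operation on $\mathcal{C}\otimes\ket{0}\bra{0}_a$ with one ancilla qubit. The unitary rotates the ancilla by an angle depending on the eigenvalue of $\tilde W$, chosen so that the imaginary part is exactly linear in $\Tr[\tilde W\mathcal{C}]$, for example a block encoding of $\tilde W$ inside a unitary $V$. If that route becomes cumbersome, I will fall back on the perturbative argument with an explicit bound $|\alpha-\epsilon\Tr(W\mathcal{C})|\le c\,\epsilon^3\|W\|^3$, stated in the limit $\epsilon\to0$.

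Finally, I will note two further points. Linearity of $W$ is what makes the construction apply to every witness. Because $W$ is arbitrary Hermitian, no structure beyond Hermiticity is needed, which gives the word ``every'' in Theorem~\ref{theorem1}.
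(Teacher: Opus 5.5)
Your core computation matches the paper's proof: feed in the Choi operator $\mathcal{C}$ of the intermediate map, take $U=e^{i\epsilon W}$, and read off $\arg\Tr(U\mathcal{C})=\epsilon\Tr(W\mathcal{C})+O(\epsilon^3)$. The paper stops at this first-order statement, written with $\simeq$, so your tracking of the $O(\epsilon^3)$ remainder and of the non-uniformity near $\Tr(W\mathcal{C})=0$ is a real refinement.

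One step, however, is false, and it fails exactly in the case the theorem is meant to detect. You claim that $\mathcal{C}(t+\Delta t,t)$ is a valid density operator that can be prepared by sending half of $\ket{\Phi^+}$ through the dynamics over $[t,t+\Delta t]$. The intermediate map $\Lambda(t+\Delta t,t)=\Lambda(t+\Delta t,t_0)\circ\Lambda(t,t_0)^{-1}$ is CP precisely when the evolution is Markovian on that interval. So any Choi operator with $\Tr(W\mathcal{C}^*)<0$ has a negative eigenvalue. In the dephasing example this eigenvalue is $\gamma\Delta t<0$, and there $W\ge 0$, so $\Tr(W\rho)\ge 0$ for every genuine state $\rho$. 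Nor is there a physical process that applies $\Lambda(t+\Delta t,t)$ to a fresh input: swapping a new system in at time $t$ yields a CP map, in general a different one. The same error undermines your exact eigenphase route, which uses $p_k=\langle w_k|\mathcal{C}|w_k\rangle\ge 0$ to make the real part positive.

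The repair is to treat $\Tr(U\mathcal{C})$ as a linear functional of $\mathcal{C}$ and assemble it from physical runs. Write $\mathcal{C}=[\mathbb{I}\otimes\Lambda(t+\Delta t,t_0)](X)$ with $X=[\mathbb{I}\otimes\Lambda(t,t_0)^{-1}](\ket{\Phi^+}\bra{\Phi^+})$. This $X$ is Hermitian with unit trace, so $X=c_+\sigma_+-c_-\sigma_-$ with states $\sigma_\pm$ and $c_\pm\ge 0$. Then $\Tr(U\mathcal{C})$ is the same real combination of the complex interference terms measured for the physical inputs $[\mathbb{I}\otimes\Lambda(t+\Delta t,t_0)](\sigma_\pm)$. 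The paper addresses this point only in its remark after the proof, that the quantity is accessible through correlation measurements without preparing the intermediate Choi operator.

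With that in place, your perturbative argument needs nothing beyond $\Tr\mathcal{C}=1$, and neither does your ancilla lifting. The lifting in fact works and removes the small-$\epsilon$ limit altogether. Let $\tilde W=W/\|W\|_\infty$; only rescale, since shifting $W$ by a multiple of $\mathbb{I}$ moves the threshold. Take $V$ to be a unitary dilation of the normal contraction $A=(\mathbb{I}+i\tilde W)/\sqrt{2}$. Then $\Tr[V(\mathcal{C}\otimes\ket{0}\bra{0})]=(1+i\Tr(\tilde W\mathcal{C}))/\sqrt{2}$, whose argument $\arctan\Tr(\tilde W\mathcal{C})$ has exactly the sign of the witness value.
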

\begin{proof}
    Corresponding to a dynamical map $\Lambda$, henceforth we will denote its Choi state by $\mathcal{C}_{\Lambda}$. We take the input state of the interferometer to be $\rho=\mathcal{C}_{\Lambda}$ and choose the controlled unitary as $U=e^{i \theta W}$, where $W$ is a non-Markovian witness operator, and $\theta\in\mathbb{R}^+$. For an infinitesimal value of $\theta$,
\begin{equation} \label{visibility}
\begin{split}
\Tr(U \rho) = & \Tr(e^{i \theta W}\mathcal{C}_{\Lambda}) \simeq \Tr(\mathcal{C}_{\Lambda}+i \theta W \mathcal{C}_{\Lambda}) \\ & =1+i\theta\Tr(\mathcal{C}_{\Lambda} W) \simeq e^{i \theta \Tr(W \mathcal{C}_{\Lambda})}.
\end{split}
\end{equation}

In this case, the phase shift is given by $\text{arg}[\Tr(U \mathcal{C}_{\Lambda})] = \theta \Tr(W \mathcal{C}_{\Lambda})$. For Markovian dynamics, the witness satisfies $\Tr(W\mathcal{C}_{\Lambda})\geq0$, leading to an effective phase shift in Eq.~\eqref{intensity} from $\phi$ to $\phi-\theta\Tr(W\mathcal{C}_{\Lambda})$. In contrast, if $\Tr(W\mathcal{C}_{\Lambda})<0$, this results in a phase shift of the form $\phi+\theta|\Tr(W\mathcal{C}_{\Lambda})|$, which is a signature of non-Markovianity.  
\end{proof}Hence, the nature of the underlying dynamics can be inferred directly from the observed phase shift in the interference pattern. It is worth noting that, 
the interferometric visibility defined in \eqref{visibility} can be accessed through suitable correlation measurements, without requiring the direct preparation or complete reconstruction of the intermediate Choi operator, as demonstrated in Ref. \cite{roy2026witness}.

Below we provide explicit examples to support our above theorem.

\begin{example} \label{example1}
    Pure dephasing dynamics: Consider the pure dephasing master equation~\cite{rivas2014quantum}
\begin{equation}
    \frac{d \rho (t)}{dt} = \mathcal{L}(\rho(t)) = \gamma(t) (\sigma_{z} \rho(t) \sigma_{z} - \rho(t) ) \label{dephasing}
\end{equation}
where $\gamma$ is the Lindblad coefficient. The dynamics is Markovian for $\gamma(t)\geq0$ and non-Markovian whenever $\gamma(t)<0$. Under the small time approximation (i.e., for $|\Delta t \gamma| <<1$), the corresponding dynamical map can be written as $\Lambda (\rho)= e^{\mathcal{L}} (\rho) = (\mathbb{I}+ \Delta t \mathcal{L} ) \rho$. The associated Choi state is given by 
\begin{equation}
    \mathcal{C}_{\Lambda}=(\mathbb{I} \otimes (\mathbb{I}+ \Delta t \mathcal{L} )) \ket{\Phi^{+}}\bra{\Phi^{+}} \label{Choi}
\end{equation}
where $\ket{\Phi^{+}}= \frac{\ket{00}+\ket{11}}{\sqrt{2}}$ is the maximally entangled state. The eigenvalues of $\mathcal{C}_{\Lambda}$ are $0, 0, \gamma \Delta t$ and $(1-\gamma \Delta t)$. For $\gamma < 0, \gamma \Delta t<0$, and hence the Choi operator loses its positive semi-definiteness, indicating the signature of non-Markovianity.  The witness of this non-Markovian dynamics can be constructed by taking the projector of the eigen vector corresponding to this negative eigen value of the Choi operator, i.e., $W = \begin{pmatrix}
1 & 0 & 0 & -1 
\end{pmatrix}^{T}$ $\begin{pmatrix}
1 & 0 & 0 & -1 
\end{pmatrix}$ \cite{bhattacharya}. 
For this witness, 
$\Tr(W \mathcal{C}_{\Lambda}) = \gamma \Delta t$. The phase shift is given by 
\begin{equation*}
    \phi-\theta \Tr(W \mathcal{C}_{\Lambda}) = \phi-\theta \gamma \Delta t
\end{equation*}
For all Markovian dynamics, $\gamma \Delta t \ge 0$, and the phase shift reduces from $\phi$ to $\phi-\theta \gamma \Delta t$, whereas an increase in the phase shift from $\phi$ to $\phi+\theta \gamma \Delta t$ is a signature of non-Markovianity.
\end{example} 
\begin{example} \label{example2}
    Depolarizing dynamics: For a depolarizing dynamics, the corresponding master equation is given by \cite{rivas2014quantum}
\begin{equation}
\begin{split}
    \frac{d \rho(t)}{dt} = &\mathcal{L}(\rho(t)) = \gamma_{x}(t) (\sigma_{x} \rho(t) \sigma_{x} - \rho(t) ) \\ & + \gamma_{y}(t) (\sigma_{y} \rho(t) \sigma_{y} - \rho(t) ) \\ & + \gamma_{z}(t) (\sigma_{z} \rho(t) \sigma_{z} - \rho(t) ).\label{depolarizing}
    \end{split}
\end{equation} 
 where $\gamma_{i}(t)$ with $i \in \{x,y,z\}$ are the Lindblad coefficients. The Choi operator has eigenvalues $1- \Delta t (\gamma_{x}(t)+\gamma_{y}(t)+\gamma_{z}(t)), \gamma_{x}(t)\Delta t, \gamma_{y}(t)\Delta t,\gamma_{z}(t)\Delta t$. The dynamics is non-Markovian if $\gamma_{i}(t) < 0$ for at least one $i \in \{x,y,z\}$. Witness can be constructed from the projector corresponding to any one of the negative eigenvalues $\gamma_{i}(t)$. For our analysis, we consider the case where $\gamma_x(t)<0$ and construct the witness operator onto the eigenvector corresponding to the negative eigenvalue $\gamma_x(t)\Delta t$, namely
$W = \begin{pmatrix}
0 & 1 & 1 & 0 
\end{pmatrix}^{T}$ $\begin{pmatrix}
0 & 1 & 1 & 0 
\end{pmatrix}$ \cite{bhattacharya}. 
For this witness, 
$\Tr(W \mathcal{C}_{\Lambda}) = \gamma_{x} \Delta t$. The phase shift is given by 
\begin{equation*}
    \phi-\theta \Tr(W \mathcal{C}_{\Lambda}) = \phi-\theta \gamma_{x} \Delta t
\end{equation*}
As discussed earlier, there is a reduction in phase shift for all Markovian dynamics, whereas an increase in the phase shift indicates non-Markovianity.

It is worthwhile to mention that for $\gamma_{x}(t)=\gamma_{y}(t)=1$ and $\gamma_{z}(t)=- \tanh{t}$, this dynamics shows eternal non-Markovianity \cite{hall2014canonical}. Hence, this dynamics cannot be detected by various measures of non-Markovianity \cite{bhattacharya}. But the witness mentioned above is able to detect this dynamics. Therefore, our detection approach based on phase shift is able to detect eternal non-Markovianity, thus highlighting the significance of using interferometry for detecting non-Markovianity.
\end{example}

Apart from the phase shift, the visibility of an interference pattern is another quantity of interest, as discussed earlier. We now explore the significant role played by visibility in the detection of non-Markovian dynamics.
\subsection{Interferometric visibility as a signature of non-Markovian dynamics}
In this section, we propose a criterion for the detection of non-Markovian dynamics based on interferometric visibility. The criterion is satisfied by all incoherent Markovian dynamics, whereas a violation indicates non-Markovianity. Before proceeding further, we formally introduce incoherent Markovian dynamics. 

\textit{Incoherent Markovian dynamics} \cite {chanda2016delineating}: Consider the dynamical evolution of a system governed by the master equation \cite{gorini1976completely,lindblad1976generators}
\begin{equation}  \label{masterequation}
\begin{split}
    \dot{\rho}(t)=&-i[H(t), \rho(t)]+\sum_{k}\gamma_{k}(t)[A_{k}(t)\rho(t)A^{\dagger}_{k}(t)\\ &-\frac{1}{2}\{A^{\dagger}_{k}(t)A_{k}(t), \rho(t)\}], 
    \end{split}
\end{equation}
where $H(t)$, $A_{k}(t)$ and $\gamma_{k}(t)$ denote the Hamiltonian of the system, the Lindblad operators and the Lindblad coefficients, respectively. The condition for incoherence with respect to a basis demands that 
\begin{equation*}
    \dot{\rho}_{ij}(t)=0 \hspace{2mm} \forall i \neq j
\end{equation*}
This implies that in Eq.~\eqref{masterequation}, $H(t)$ is diagonal in that basis and $A_{k}(t)$ can have at most one non-zero element in each column in that basis \cite{chanda2016delineating}. The condition for Markovianity implies $\gamma_{k}(t) \ge 0$ at all times $t$. Therefore, if the Lindblad coefficient becomes negative at some particular instant, then that is a sign of non-Markovian dynamics. 

Next, we demonstrate that non-Markovianity can be detected from the visibility by choosing $U$ from a suitable class of interferometric unitaries, denoted by $\{\mathcal{U}_
{int}\}$. The corresponding visibility over this admissible class is defined as follows.
\begin{definition}
    The visibility of an interference pattern under the interferometric class of unitaries is given by 
    \begin{equation}
    \begin{split}
\tilde{V}(t) & =  \max\limits_{U \in \mathcal{U}_{int}} \left|\operatorname{Tr}(U \rho)\right| \\ & = |\Tr(U_{opt} \rho)|, 
\end{split}
    \end{equation}
    where 
    \begin{equation} \label{Uint}
        \mathcal{U}_{\mathrm{int}}
=
\left\{
\begin{pmatrix}
0 & e^{i\alpha} \\
e^{i\beta} & 0
\end{pmatrix}
\,\middle|\,
\alpha,\beta \in \mathbb{R}
\right\},
 \end{equation} and $U_{opt}$ denotes the optimal unitary.
$U_{int}$ represents the class of interferometrically implementable path-exchange unitaries. In a two-path Mach-Zehnder interferometer, this controlled operation acts by exchanging the two paths while allowing arbitrary phases. This operational restriction eliminates any contribution from the path populations, thus ensuring that the resulting visibility depends on the coherence of the quantum states.  
\end{definition}
We now formulate the detection of non-Markovianity from the visibility in the following theorem.
\begin{theorem} \label{theorem2}
   For any incoherent Markovian (CP-divisible) noise acting between the two beam splitters of a Mach-Zehnder interferometer, the interferometric visibility maximized over the set of all interferometric unitaries $\{\mathcal{U}_{int}\}$ is a monotonically non-increasing function of time:
   \begin{equation}
       \frac{d}{dt} \tilde{V}(t) \leq 0 ~~\forall ~t.
   \end{equation}
   Any violation of this monotonicity (i.e. $\dot{\tilde{V}}(t)>0$ at some time $t$) is sufficient to conclude that the dynamics is non-Markovian.
\end{theorem}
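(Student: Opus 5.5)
First compute $\tilde V(t)$ in closed form for a qubit internal state. Take $U\in\mathcal{U}_{int}$ as in Eq.~\eqref{Uint} and a general qubit density matrix $\rho(t)$ with entries $\rho_{ij}(t)$. Then
\begin{equation*}
\Tr(U\rho)=e^{i\alpha}\rho_{10}+e^{i\beta}\rho_{01}.
\end{equation*}
Since $\rho_{10}=\rho_{01}^{*}$, the triangle inequality gives $|\Tr(U\rho)|\le 2|\rho_{01}|$. Equality is attained by choosing $\alpha=\arg\rho_{01}$ and $\beta=-\arg\rho_{01}$. Hence
\begin{equation*}
\tilde V(t)=2|\rho_{01}(t)|=C_{l_1}(\rho(t)),
\end{equation*}
so the optimized visibility is exactly the $l_1$-norm of coherence in Eq.~\eqref{coherencemeasure}. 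The theorem therefore reduces to showing that $C_{l_1}$ is non-increasing under incoherent CP-divisible dynamics generated by Eq.~\eqref{masterequation} with $\gamma_k(t)\ge0$. (Note that the condition $\dot\rho_{ij}=0$ as literally written would make this trivial; I read it instead as the structural constraints on $H(t)$ and $A_k(t)$ stated there.)

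Second, establish monotonicity through divisibility rather than by differentiating the master equation. Incoherent CP-divisibility means that for every $t\ge s$ the intermediate map $\Lambda(t,s)$ is CPTP. It is also generated by a diagonal $H$ and Lindblad operators with at most one nonzero entry per column, with nonnegative rates. I would argue that such a map is an incoherent operation (MIO, or even IO). Its generator has incoherent structure: the Hamiltonian term and the jump terms $A_k\,\cdot\,A_k^\dagger$ map diagonal matrices to diagonal matrices, and the anticommutator terms involve the diagonal matrices $A_k^\dagger A_k$. The time-ordered exponential of a generator with $\gamma_k\ge0$ is then a limit of products of incoherent CPTP maps of the form $\mathbb{I}+\Delta t\,\mathcal{L}$. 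This works because for $\gamma_k\ge0$ and small $\Delta t$, the map $\mathbb{I}+\Delta t\,\mathcal{L}$ agrees to first order with a CPTP map whose Kraus operators are $\sqrt{\gamma_k\Delta t}\,A_k$ and $\mathbb{I}-\Delta t(iH+\tfrac12\sum_k\gamma_kA_k^\dagger A_k)$. All of these Kraus operators have at most one nonzero entry per column, so they are incoherent Kraus operators.

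By the monotonicity of $C_{l_1}$ under incoherent operations (stated in the preliminaries), we get $C_{l_1}(\rho(t))=C_{l_1}(\Lambda(t,s)\rho(s))\le C_{l_1}(\rho(s))$, and hence $\tilde V(t)\le\tilde V(s)$. Taking $t=s+\Delta t$ with $\Delta t\to0^+$ gives $\dot{\tilde V}\le0$ wherever the derivative exists. The converse statement is just the contrapositive: if $\dot{\tilde V}>0$ at some time, the dynamics cannot be incoherent CP-divisible, so within the incoherent class it must be non-Markovian.

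As a cross-check, I would also compute directly. Writing $\rho_{01}=re^{i\varphi}$, one has $\frac{d}{dt}|\rho_{01}|=\mathrm{Re}(\dot\rho_{01}e^{-i\varphi})$. The Hamiltonian part contributes only a phase rotation. The dissipative terms, restricted to the off-diagonal element, give $-\tfrac12\gamma_k[(A_k^\dagger A_k)_{00}+(A_k^\dagger A_k)_{11}]\rho_{01}$ plus a transfer term $\gamma_k (A_k)_{a0}(A_k)_{b1}^{*}\rho_{01}$ that is nonzero only when both columns map to distinct rows. The transfer term is bounded by the AM–GM inequality $|(A_k)_{a0}||(A_k)_{b1}|\le\tfrac12(|(A_k)_{a0}|^2+|(A_k)_{b1}|^2)$.

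The main obstacle is this step: showing that, with only one nonzero element per column, the coherence-transfer contributions never outweigh the decay. This includes handling $\rho_{10}$ feeding into $\rho_{01}$ when $A_k$ is a permutation-type operator. That case is why I prefer the Kraus/IO argument, and it is where I would check most carefully that permutation-like Lindblad operators still give $|\cdot|$ non-increasing.
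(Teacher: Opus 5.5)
\textbf{Gap.} Your computation $\tilde V=2|\rho_{01}|=C_{l_1}(\rho)$ is correct. Routing monotonicity through the intermediate maps $\Lambda(t,s)$ being incoherent operations is also a sound strategy. The problem is the step where you assert that the anticommutator terms involve diagonal matrices $A_k^\dagger A_k$, so that $K_0$ is an incoherent Kraus operator. This does not follow from the structural constraints you chose to adopt in place of the incoherence condition. The paper presents those constraints as consequences of incoherence, not as an equivalent definition.

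``At most one nonzero entry per column'' allows two columns to land in the same row. For example, $A=|0\rangle(\langle 0|+\langle 1|)$ qualifies, yet $A^\dagger A=(|0\rangle+|1\rangle)(\langle 0|+\langle 1|)$. Take $H=0$ and constant $\gamma>0$; this is a CP-divisible semigroup obeying your reading of the hypotheses. It gives $\dot\rho_{01}=-\gamma\rho_{01}-\gamma/2$. Starting from $\rho(0)=|1\rangle\langle 1|$, the optimized visibility is then $\tilde V(t)=1-e^{-\gamma t}$, which strictly increases as the state is pumped into the dark state $|-\rangle$.

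In general, the term $-\tfrac12\{M,\rho\}$ with $M=\sum_k\gamma_kA_k^\dagger A_k$ feeds the state-independent term $-\tfrac12 M_{01}$ into $\dot\rho_{01}$. So under your reading the statement is false, and this term is the real obstacle, not the permutation-type transfer you flagged. The $\rho_{10}\to\rho_{01}$ feeding from an anti-diagonal $A_k$ is already handled by your AM--GM bound.

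\textbf{Repair.} The missing input is the incoherence condition itself, read as: $\mathcal{L}$ maps diagonal matrices to diagonal matrices.
\begin{itemize}
\item Write $\mathcal{L}(\rho)=-G\rho-\rho G^\dagger+\sum_k\gamma_kA_k\rho A_k^\dagger$ with $G=iH+\tfrac12 M$. For $j\neq i$, the $|j\rangle\langle i|$ entry of $\mathcal{L}(|i\rangle\langle i|)$ is $-G_{ji}+\sum_k\gamma_k(A_k)_{ji}\overline{(A_k)_{ii}}$.
\item The column structure kills the sum. Hence incoherence forces $G$, i.e.\ both $H$ and $M$, to be diagonal. Only the sum $M$ needs to be diagonal, not each $A_k^\dagger A_k$.
\item Then $K_0=e^{-iH\Delta t}(\mathbb{I}-\Delta t\,M)^{1/2}$ is diagonal. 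It completes $\{\sqrt{\gamma_k\Delta t}\,A_k\}$ to an exactly trace-preserving incoherent channel; your $K_0=\mathbb{I}-\Delta t\,G$ overshoots by $\Delta t^2G^\dagger G$.
\item Continuity of $C_{l_1}$ then carries $C_{l_1}(\rho(t))\le C_{l_1}(\rho(s))$ through the product-formula limit.
\end{itemize}

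\textbf{Comparison with the paper.} With this repair, your proof takes a genuinely different route from the paper's, which never invokes the incoherent structure. The paper writes $D^2=D_{cl}^2+\tfrac14 C_{l_1}(\Delta)^2$ and takes the pair $\rho_1,\ \sigma_z\rho_1\sigma_z$, so that $D=C_{l_1}(\rho_1)$. It then concludes $\mathcal{B}>0$ from any revival of $C_{l_1}$, which it identifies with $\tilde V$ by citation where you derive it.

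That route buys a quantitative lower bound on the BLP measure and a stronger conclusion (failure of P-divisibility, not merely CP-divisibility). However, it tacitly needs the dynamics to commute with $\sigma_z$-conjugation, so that $\sigma_z\rho_1(t)\sigma_z$ is actually an evolved state. This holds in the dephasing and depolarizing examples. It fails for incoherent generators such as $A_1=|0\rangle\langle +|$, $A_2=|1\rangle\langle -|$ with equal rates and $H=0$. Your IO argument needs no such covariance. Your restriction of the converse to the incoherent class is also exactly right, as the counterexample above shows.
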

\begin{proof}
    We use the BLP measure of non-Markovianity for our purpose. We begin by establishing a relation between the BLP measure and quantum coherence for qubit dynamics. 

         Consider two arbitrary states $\rho_1(t)=
\begin{pmatrix}
\rho_{00}^{(1)}(t) & \rho_{01}^{(1)}(t) \\[6pt]
\rho_{10}^{(1)}(t) & \rho_{11}^{(1)}(t)
\end{pmatrix}$
and
$\rho_2(t)=
\begin{pmatrix}
\rho_{00}^{(2)}(t) & \rho_{01}^{(2)}(t) \\[6pt]
\rho_{10}^{(2)}(t) & \rho_{11}^{(2)}(t)
\end{pmatrix}$. 

\bigskip
The distance between them is given by \begin{equation*}
\begin{split}
    D(\rho_{1}(t), \rho_{2}(t))& = \frac{1}{2}\Vert \rho_{1}(t)-\rho_{2}(t) \Vert \\ & = \frac{1}{2}\Tr[\sqrt{(\rho_1(t)-\rho_2(t))^{\dagger}(\rho_1(t)-\rho_2(t))}]\\ & = \frac{1}{2}\Tr[\sqrt{\Delta(t)^{\dagger} \Delta(t)}], 
    \end{split}
\end{equation*}
where $\Delta(t) = \rho_{1}(t)-\rho_{2}(t)$ is a Hermitian and traceless matrix. It can be represented in a compact form as 
\begin{equation*}
    \Delta(t)= \begin{pmatrix}
        \delta(t) & \beta(t) \\
        \beta(t)^{*} & -\delta(t) \\
    \end{pmatrix},
\end{equation*} where $\delta(t)=\rho_{00}^{(1)}(t)-\rho_{00}^{(2)}(t)$ and $\beta(t)=\rho_{01}^{(1)}(t)-\rho_{01}^{(2)}(t).$\\
The distance then reduces to
$D(\rho_{1}(t), \rho_{2}(t)) =\sqrt{|\beta (t)|^2 + |\delta (t)|^2}$. $|\delta(t)|=\frac{1}{2} \sum_{i} |\langle i| \Delta(t)| i \rangle|$ and $C_{l_1}(\Delta(t))=\sum_{i \neq j}|\Delta_{ij}(t)|=2 |\beta(t)|.$ Thus, $|\delta(t)|$ represents the classical trace distance between the populations (and hence is denoted by $D_{cl}$ henceforth) and $|\beta(t)|$ is related to the coherence. Thus, 
\begin{equation*}
\begin{split}
   & [D(\rho_{1}(t), \rho_{2}(t))]^2 = D^2_{cl}+ \frac{1}{4} [C_{l_1}(\Delta(t))]^2. \\ & \frac{d}{dt}D(\rho_1(t),\rho_{2}(t))=\frac{1}{D} [D_{cl} \dot{D}_{cl}+\frac{1}{4}C_{l_1}(\Delta(t)) \dot{C}_{l_1}(\Delta(t))], 
    \end{split}
\end{equation*}
where we have used the shorthand notation $D$ for $D(\rho_{1}(t), \rho_{2}(t))$ and $\dot{x}=\frac{dx}{dt}$ for a time-dependent function $x$.
The BLP measure is now given by 
\begin{equation} \label{NMandcoherence}
    \begin{split}
        \mathcal{B} &= \max_{\rho_1(0), \rho_{2}(0)} \int_{\frac{d}{dt} D>0} \frac{d}{dt}D(\rho_1(t),\rho_{2}(t)) dt \\ & = \max_{\rho_1(0),\rho_{2}(0)} \int_{\dot{D}>0} \frac{1}{D}[D_{cl} \dot{D}_{cl}+\frac{1}{4}C_{l_1}(\Delta(t)) \dot{C}_{l_1}(\Delta(t))]dt.
    \end{split}
\end{equation}
where $\rho_{{1}(2)}(0)$ are the initial states. 

For a special pair of states where $\rho_{2}(t)=\sigma_{z}\rho_{1}(t)\sigma_z$, $D_{cl}$ vanishes. These kind of transformations preserve the populations and change only the phases of the off-diagonal elements.  For this special case, $D=\frac{1}{2} C_{l_1}(\Delta(t))= C_{l_1}(\rho_1(t))= C_{l_1}(\rho_2(t))$. Eq.~\eqref{NMandcoherence} then reduces to 
\begin{equation} \label{equality}
\begin{split}  
    \mathcal{B}=& \max_{\rho_1(0)} \int_{\frac{d}{dt} C_{l_1}(\rho_1(t))>0} \frac{d}{dt} C_{l_1}(\rho_1(t)) dt \\ & = \max_{\rho_2(0)} \int_{\frac{d}{dt} C_{l_1}(\rho_2(t))>0} \frac{d}{dt} C_{l_1}(\rho_2(t)) dt.
    \end{split}
\end{equation}
This establishes a direct relation between the BLP measure and the $l_1$ norm of coherence. For arbitrary states, since the BLP measure provides the maximum value when optimized over all initial states, the coherence provides a lower bound to it, i.e., 
\begin{equation} \label{lowerbound}
    \mathcal{B} \ge \max_{\rho_{1}(0)} \int_{\frac{d}{dt} C_{l_1}(\rho_1(t))>0} \frac{d}{dt} C_{l_1}(\rho_1(t)) dt .
\end{equation} 
This suggests that, if at instant there is a revival of coherence from the environment to the system, $\frac{d}{dt} C_{l_1}(\rho_1(t)) > 0$, there is an information backflow from the environment to the system giving $\mathcal{B} > 0$. 
Moreover, in \cite{biswas2017interferometric,qureshi2019coherence} the authors have established that the visibility of an interference pattern maximised over interferometric unitaries is equal to the $l_1$ norm of coherence (this is also supported by the examples provided below). Hence, Eq.~\eqref{lowerbound} reduces to 
\begin{equation*} 
    \mathcal{B} \ge \max_{\rho_1(0)} \int_{\frac{d}{dt} \tilde{V}(t)>0} \frac{d}{dt} \tilde{V}(t) dt .
\end{equation*}
Thus, $\dot{\tilde{V}}(t)>0$ at some time $t$ implies $\mathcal{B} > 0$. This suggests that an increase in $\tilde{V}(t)$ at some time is a sufficient condition for non-Markovianity. 

Although the proof is presented for qubit systems, the BLP measure is always lower-bounded by the coherence extends to arbitrary higher dimensions. However, for $d \ge 3$, one must identify an appropriate pair of quantum states for which the trace distance is exactly equal to the $l_1$ norm of coherence.
\end{proof}
Below, we provide supporting examples to illustrate this. For the sake of convenience, we consider cases where $D_{cl}=0$, and hence Eq.~\eqref{equality} holds.
\begin{example}
    Pure dephasing dynamics: Consider the master equation given by Eq.~\eqref{dephasing} with the Lindblad coefficient given by 
    \begin{equation}
        \gamma(t)=\int_{0}^{\infty} J(\omega) \coth[{\frac{\omega}{2 k_B T}} ]\frac{\sin{\omega t}}{\omega} d\omega  \label{Lindbladcoefficient}
    \end{equation}
    where $J(\omega)$ is the spectral density function of the environment given by
    \begin{equation}
        J(\omega) = \frac{\omega^s}{\omega_c^{s-1}-1} e^{\frac{-\omega}{\omega_c}}. \label{spectraldensity}
    \end{equation}
    $s$ and $\omega_c$ represent the Ohmicity parameter and the cut-off frequency respectively. At a temperature $T = 0$, the Lindblad coefficient reduces to \cite{chanda2016delineating}
    \begin{equation}
        \gamma_{0} (t,s)= \frac{\omega_c^{s} \Gamma(s) \sin[{s \arctan(\omega_c t)}]}{[1+(\omega_c t)^2]^{s/2}}, 
    \end{equation}
    where $\Gamma(s)$ is the Euler Gamma function.
    This dynamics is non-Markovian, i.e., $\gamma_{0}(t,s) < 0$ for $s>2$ \cite{PhysRevA.87.010103}. If the initial state of the system is 
    \begin{equation}  \label{initial state}
        \rho(0)=\begin{pmatrix}
            \rho_{00}(0) & \rho_{01}(0) \\
            \rho_{10}(0)  & \rho_{11}(0) 
        \end{pmatrix},
    \end{equation} the evolution of this state under the dephasing dynamics is governed by the Kraus operators given by $K^{\text{dep}}_0=\sqrt{1-\frac{p(t)}{2}} I$ and $K^{\text{dep}}_1=\sqrt{\frac{p(t)}{2}} \sigma_z$. The time-evolved state is given by
    \begin{equation*}
        \rho_{\text{dep}}(t)=\begin{pmatrix}
            \rho_{00}(0) & \rho_{01}(0) G_{\text{dep}}(t) \\
            \rho_{10}(0) G_{\text{dep}}(t) & \rho_{11}(0) 
        \end{pmatrix},
    \end{equation*}
    where $G_{\text{dep}}(t)=1-p(t)=e^{-2\int_{0}^{t} \gamma(t^{'})dt^{'}}$ is the decoherence function for the dephasing dynamics. The $l_1$ norm of coherence is then given by  $$C_{l_1}(\rho_{\text{dep}}(t))=2|{\rho_{01}(0)}| G_{\text{dep}}(t).$$ Thus, $\frac{dC_{l_1}(\rho(t))}{dt}= 2|\rho_{01}(0)|\frac{d G_{\text{dep}}(t)}{dt}$. \\This shows that whenever $\gamma_0(t,s)<0$, or equivalently $s>2$, we have $\frac{dC_{l_1}(\rho(t))}{dt}>0$. This indicates the revival of coherence for a non-Markovian dynamics. Moreover, if we have a Mach-Zehnder interferometer with $\rho=\rho_{\text{dep}}(t)$ as the input to the interferometer, the visibility maximized over all interferometric unitaries is given by 
\begin{equation*}
\begin{split}
    \tilde{V}(t) = & \max_{U \in \mathcal{U}_{int}} \left|\operatorname{Tr}[U \rho_{\text{dep}} (t)]\right| \\ & = 2|{\rho_{01}(0)}| G_{\text{dep}}(t), 
    \end{split}
\end{equation*} which equals the $l_1$ norm of coherence. The corresponding optimal unitary is obtained for $\alpha=\phi, \beta=-\phi$ in Eq.~\eqref{Uint}, i.e., 
\begin{equation} \label{Uopt}
    U_{opt}= \begin{pmatrix}
            0 & e^{i \phi} \\
            e^{-i \phi}  & 0 
        \end{pmatrix}, 
\end{equation}
where $\phi= \arg[\rho_{01}(0)].$
For this value of $\tilde{V}(t)$, the BLP measure is given by 
\begin{equation}
    \mathcal{B}_{\text{dep}}= \int_{\gamma_{0} (t,s) < 0} \frac{d}{dt} G_{\text{dep}}(t) dt.
\end{equation}
The visibility-based measure of non-Markovianity ($\mathcal{B}$) is plotted as a function of the Ohmicity parameter ($s$) in Figure \ref{figdephasing}. From the Figure, it is evident that $\mathcal{B}_{\text{dep}}>0$ for $s>2$, indicating the signature of the dynamics being non-Markovian for $s>2$.
\begin{figure}[ht]
\includegraphics[width=.47\textwidth]{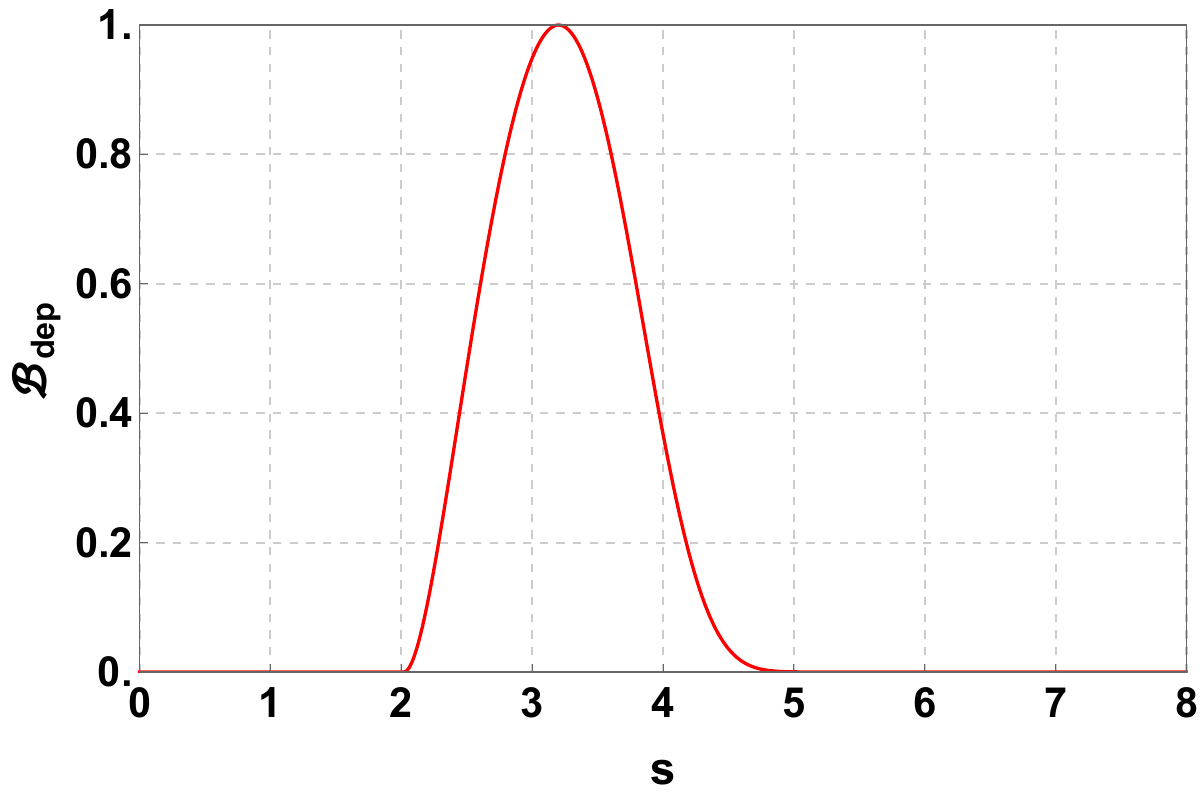} 
\caption{Visibility-based BLP measure of non-Markovianity as a function of the Ohmicity parameter for dephasing dynamics.}\label{figdephasing}
\centering
\end{figure}
\end{example}
\begin{example}
    Depolarizing dynamics: The master equation is given by Eq.~\eqref{depolarizing}. For simplicity, we assume $\gamma_{x}(t)=\gamma_{y}(t)=\gamma_{z}(t)=\gamma(t)$; nevertheless, the analysis remains valid even when these rates differ. The Lindblad coefficient, spectral density are given by Eqs.~\eqref{Lindbladcoefficient} and \eqref{spectraldensity} respectively. The corresponding Kraus operators are given by $$K^{\text{depol}}_0=\sqrt{1-p(t)}I, K^{\text{depol}}_i=\sqrt{\frac{p(t)}{3}}\sigma_i, i \in \{x,y,z\}.$$ Under this Kraus representation, the initial state (as given by Eq.~\eqref{initial state}) evolves to
    \begin{equation*}
    \begin{split}
        & \rho_{\text{depol}}(t)= \\ & \begin{pmatrix}
            (1-\frac{2p}{3})\rho_{00}(0)+\frac{2p}{3}\rho_{11}(0) & \rho_{01}(0) G_{\text{depol}}(t) \\
            \rho_{10}(0) G_{\text{depol}}(t) & \frac{2p}{3}\rho_{00}(0)+(1-\frac{2p}{3})\rho_{11}(0))
        \end{pmatrix},
        \end{split}
    \end{equation*}
        where $G_{\text{depol}}(t)=1-\frac{4 p(t)}{3}=e^{-4\int_{0}^{t} \gamma(t^{'})dt^{'}}$ is the decoherence function for the depolarizing dynamics. The measure of coherence for this evolved state is 
        $$C_{l_1}(\rho_{\text{depol}}(t))=2|{\rho_{01}(0)}| G_{\text{depol}}(t).$$ $$ \text{Hence}, \frac{dC_{l_1}(\rho(t))}{dt}= 2|\rho_{01}(0)|\frac{d G_{\text{depol}}(t)}{dt}.$$
        Similar to the previous example, $\gamma_{0}(t,s) < 0 (s > 2)$ indicates a revival of coherence. The visibility for an input state $\rho=\rho_{\text{depol}}(t)$ is then given by 
        \begin{equation*}
\begin{split}
    \tilde{V}(t) = & \max_{U \in U_{int}} \left|\operatorname{Tr}[U \rho_{\text{depol}} (t)]\right| \\ & = 2|{\rho_{01}(0)}| G_{\text{depol}}(t), 
    \end{split}
\end{equation*} which equals the $l_1$ norm of coherence. The corresponding optimal unitary is given by Eq.~\eqref{Uopt}.
For this value of $\tilde{V}(t)$, the BLP measure is given by 
\begin{equation}
    \mathcal{B}_{\text{depol}}= \int_{\gamma_{0} (t,s) < 0} \frac{d}{dt} G_{\text{depol}}(t) dt.
\end{equation}
For $s>2$, the depolarizing dynamics is non-Markovian, and hence $\mathcal{B}_{\text{depol}} > 0$. This is represented in Figure \ref{figdepolarizing}.
\begin{figure}[ht]
\includegraphics[width=.47\textwidth]{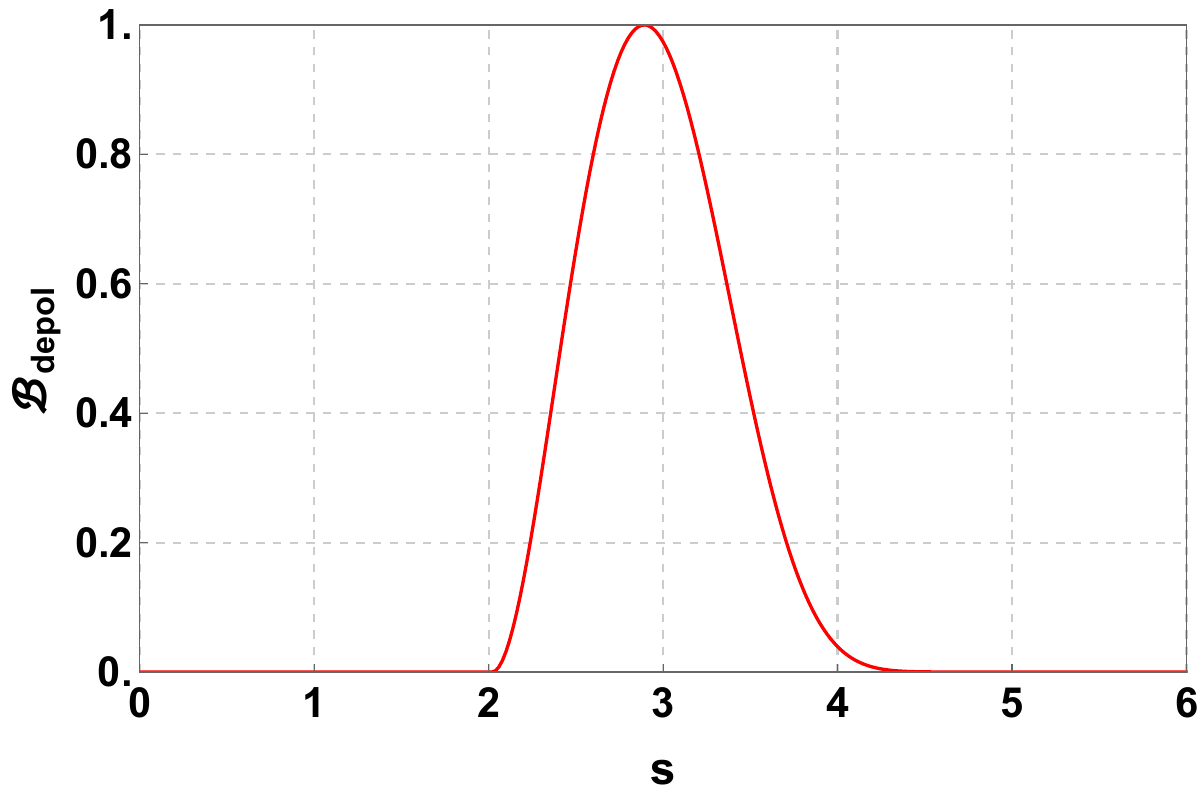}
\caption{Visibility-based BLP measure of non-Markovianity as a function of the Ohmicity parameter for depolarizing dynamics.}\label{figdepolarizing}
\centering
\end{figure}
\end{example}
In both the examples studied above, the visibility-based BLP measure attains its maximum value for $s \simeq 3.2$ and $s \simeq 2.8$, as evident from Figures \ref{figdephasing} and \ref{figdepolarizing} respectively. The decrease in the measure for higher values of $s$ is an indication of smaller amplitude of oscillations in the dephasing and depolarizing rates \cite{PhysRevA.90.052103}. \\
\section{Conclusion} \label{secIV}
The detection of memory effects in open quantum systems is significant due to the various advantages offered by such dynamics in different quantum information processing tasks \cite{laine2014nonlocal,PhysRevLett.100.180402,bylicka2014non,zhang2024non,PhysRevE.97.062108,marcantoni2017entropy}.
The different approaches to characterizing such dynamics reveal their diverse underlying features. In this work, we have developed an optical interferometric framework for witnessing non-Markovian dynamics. In particular, we have demonstrated that the proposed witness-based and coherence-based schemes can characterize such dynamics through the phase shift and visibility of the resulting interference pattern, respectively. The effectiveness of the proposed detection criteria was illustrated through representative examples of pure dephasing and depolarizing dynamics.

Our work introduces an operationally meaningful and experimentally implementable route for identifying memory effects. Unlike conventional mathematical approaches based on the breakdown of CP divisibility of the Choi operator, our framework provides a more physically intuitive characterization of memory effects through directly measurable parameters of the interference pattern. The generality and experimental accessibility of the framework make it potentially useful for applications involving quantum-enhanced technologies, particularly in scenarios where environmental noise and finite-size quantum devices play an important role. 

While the present work focuses on qubit dynamics, extending the detection of memory effects to multiqubit systems through suitably designed multi-particle interferometric setups presents a promising avenue for future research. Furthermore, developing non-Markovian measures based on visibility and establishing their connections with other prominent measures of non-Markovianity, including those based on mutual information, accessible information, uncertainty relations, remains an important open problem.
\section{Acknowledgements} 
The authors acknowledge Himadri Sekhar Dhar for insightful discussions and comments on the manuscript. A.G.M. acknowledges the financial support provided by the IIT Goa Startup Grant (No. 2025/SG/AGM/058).
\bibliography{main}

\end{document}